\newcommand{\newc}{\newcommand}
\newc{\beq}{\begin{equation}}
\newc{\eeq}{\end{equation}}
\newc{\bea}{\begin{array}}
\newc{\eea}{\end{array}}
\newcommand{\ben}{\begin{eqnarray}}
\newcommand{\een}{\end{eqnarray}}
\newc{\ra}{\rightarrow}
\newc{\bfx}{{\bf x}}
\newc{\bfV}{{\bf V}}
\newc{\cO}{{\cal O}}
\newc{\bfv}{{\bf v}}
\newc{\bfu}{{\bf u}}
\newc{\bfp}{{\bf p}}
\newc{\ve}{{\varepsilon}}
\newc{\Psibar}{\overline\Psi}
\newc{\w}{{\bf w}}
\newc{\E}{{\mathbf{E}}}
\newc{\EE}{{\mathcal E}}
\newc{\bfn}{{\mathbf\nabla}}
\newc{\la}{{\cal L}}
\newc{\tla}{{\tilde{\cal L}}}
\newc{\bp}{{\bf p}}
\newc{\ho}{\hookrightarrow }
\newc{\bP}{{\bf P}}
\newc{\pd}{{\partial}}
\newc{\piv}{{\partial_4}}
\newc{\pv}{{\partial_5}}
\newc{\bJ}{{\bf J}}
\newc{\bze}{{\mathbf 0}}
\newc{\bK}{{\bf K}}
\newc{\tphi}{{\tilde\phi}}
\newc{\tF}{{\tilde F}}
\newc{\tD}{{\tilde D}}
\newc{\tJ}{{\tilde J}}
\newc{\tj}{{\tilde j}}
\newc{\bD}{{\bf D}}
\newc{\tvphi}{{\tilde\varphi}}
\newc{\trho}{{\tilde\rho}}
\newc{\ttheta}{{\tilde\theta}}
\newc{\tpsi}{{\tilde\psi}}
\newc{\tu}{{\tilde u}}
\newc{\cD}{{\cal D}}
\newc{\tPhi}{{\tilde\Phi}}
\newc{\tPsi}{{\tilde\Psi}}
\newc{\tA}{{\tilde A}}
\newc{\talpha}{{\tilde\alpha}}
\newc{\tbeta}{{\tilde\beta}}
\newc{\bA}{{\mathbf A}}
\newc{\bB}{{\bf B}}
\newc{\br}{{\bf r}}
\newc{\sig}{{\mathbf\sigma}}
\newc{\eg}{{\rm e.g.\ }}
\newc{\ie}{{\rm i.e.\ }}
\newcommand{\pslash}{\not{\hbox{\kern-2.3pt $p$}}}
\newcommand{\pdslash}{\not{\hbox{\kern-2pt $\partial$}}}
\newtheorem{theorem}{Theorem}[section]
\newtheorem{lemma}{Lemma}[section]
\newtheorem{proposition}{Proposition}[section]
\newenvironment{proof}[1][Proof]{\noindent\textbf{#1.} }{\ \rule{0.5em}{0.5em}}
\begin{document}

\title{INEQUALITIES FOR THE QUANTUM PRIVACY}

\author{M. A. S. TRINDADE}

\affiliation{Departamento de Ci\^{e}ncias Exatas e da Terra, Universidade do Estado da Bahia, Rodovia Alagoinhas/Salvador, BR 110, Km 03\\
Alagoinhas, Bahia, 48040-210, Brazil\\
}

\author{E. PINTO}

\affiliation{Departamento de Matem\'atica, Universidade Federal da Bahia, Campus Ondina\\
Salvador, Bahia 40210-340, Brazil\\
}

\begin{abstract}
In this work we investigate the asymptotic behavior related to the quantum privacy for multipartite systems. In this context, an inequality for quantum privacy was obtained by exploiting of quantum entropy properties. Subsequently, we derive a lower limit for the quantum privacy through the entanglement fidelity. In particular, we show that there is an interval where an increase in entanglement fidelity implies a decrease in quantum privacy.
\end{abstract}

\keywords{Inequalities; quantum privacy; entanglement fidelity.}

\maketitle

\section{Introduction}
In quantum information science, the quantum cryptography exploits non-classical features of quantum systems to ensure security in the transmission of information \cite{Nielsen}. Classical encryption techniques have their security based on unproven computational difficulty of solving certain problems. The methods of quantum key distribution employ two parties to produce a random secret key known only to them. They are proven secure based on physical principles, without imposing any computational considerations. An interesting aspect is that quantum cryptography has given contributions to classical cryptography; amplification of privacy and classical bound information are examples of concepts in classical information whose discovery was inspired by quantum cryptography \cite{deutsch,Bennet2,Maurer1,Maurer2}. Besides, efforts have been made to an understanding of the relationship between private classical information and quantum information \cite{Wilde}. These facts make quantum cryptography an emerging field of great interest \cite{Gisin,Dongsu}.

The first ideas concerning quantum cryptography were proposed in 1970 by Wiesner, Bennett and Brassard in 1984 \cite{Wiesner1,Bennet1}. It is based on a combination of concepts of quantum physics and information theory. Substantial developments in quantum optics and optical fiber technology allowed its experimental realization \cite{Gisin,Bennet3}. Limitations on practical quantum cryptography has been analyzed by Brassard \textit{et al.} \cite{Gil}, showing that parametric down-conversion offers enhanced performance compared to its weak coherent pulse counterpart. The transmission loss limits in context of the continuous-variable quantum information were investigated by Namiki and Hirano \cite{Namiki} using coherent states and taking into account excess gaussian noise on quadrature distribution.

One of the problems in cryptography is to establish the limits of the techniques from the viewpoint of information theory. A theoretical formulation of quantum information channels of communication that allows a description of the limits in the context of information theory was proposed by Barnett and Phoenix \cite{Bar}. An important discovery in the theory of quantum information is the Holevo limit \cite{Holevo}. Given a sender (Alice) and receiver (Bob), this limit says that no matter how Bob perform their measurements, the mutual information between Alice and Bob can not exceed an amount $\chi$, called Holevo limit. In this scenario, Schumacher \cite{West} presented the definition of optimal guaranteed privacy for a quantum channel as
\begin{equation}
P=sup[H_{Bob:Alice}-H_{Eve:Alice}],
\end{equation}
where the supremum is taken over all strategies that Alice and Bob may employ to use the channel, $H$ is the mutual information and Eve is the eavesdropper. In addition, he derived based on the Holevo limit, a lower limit for privacy in terms of coherent information $I$:
\begin{equation}
P\geq{S(\rho^{B})-S(\rho^{E})=I(\rho,\varepsilon}), \label{P2}
\end{equation}
where $S$ is the von Neumann entropy, the indices $B$ and $E$ refer to Bob and Eve, respectively, and $\varepsilon$ characterizes the noise channel.

In this article we present some general inequalities for quantum privacy in the context of quantum cryptography. The paper is unfolded in the following sequence of presentation. In the Section \ref{sec:QP1} we developed some general asymptotic results about of quantum privacy. Section \ref{sec:QP2} contains a relationship between the quantum privacy and the quantum fidelity. Sec. \ref{sec:conclu}, we present the conclusions and some perspectives.

\section{Asymptotic Behavior}\label{sec:QP1}

Asymptotic results are becoming more common in the theory of quantum information. The theory of large deviations has been developed in a quantum scenario and asymptotic estimates for the probability of events which are useful for identification in noisy channels were obtained \cite{Ld}. Bae and Ac\'in \cite{Bae} proved that quantum cloning becomes equivalent to state estimation in the asymptotic regime where the numbers of clones tends to infinity. The asymptotic context is  present in the quantum information spectrum approach \cite{Bowen,Hay} and it can provide, for example, a general expression for the classical capacity of arbitrary quantum channels. It extends the scheme high probability events to high probability subspaces of states defined on a Hilbert space. An asymptotic theory of quantum inference \cite{Hay1} has also been developed. Recent developments in asymptotic quantum hypothesis testing have been obtained \cite{Nus1,Nus2,Aud}. In particular, the multiple hypothesis testing problem for symmetric quantum state discrimination was addresses and upper bounds on asymptotic error exponents were derived \cite{Aud}. If we want to fully know the density operator $\rho$, we need an infinite ensemble of quantum systems prepared in the same state, which is impossible in pratice \cite{Hay1}. The impossibility of perfect state estimation is a fundamental result \cite{Bae} in quantum mechanics and for the large sample case, we can apply the asymptotic theory. Asymptotic security of quantum key distribution under collective attacks was analyzed by Zhao \emph{et al.} \cite{zhao}. A lower bound to the secret key rate for an binary modulated continuous-variable quantum key distribution was evaluated. In reference \cite{Weh} Wehner and Winter discuss some open problems of interest for the foundation of the security of quantum cryptographic protocols assuming an asymptotic view point of large dimension. The problem of finding the optimal reversing channel on n-qudits ensembles, usefull to quantum key distribution, was solved using quantum local asymptotic normality \cite{Bowles} (an extension of an important result from classical asymptotic statistics). The optimal strategy, in order to reverse the action of an arbitrary channel acting on an ensemble of n qudits is optimally reverse the associated Gaussian channel and map the output back onto m qudits. More recently, in a study of long-distance quantum communication and cryptography beyond the use of entanglement distillation, an asymptotic version of the distinguishability bound has been derived by Bauml \emph{et al.} \cite{Bauml}.

In this scenario, the following lemma provides some asymptotic results about the quantum entropy that enables us to derive a general inequality for quantum privacy. A key ingredient is the semicontinuity of quantum entropy \cite{We,Oh}.
\begin{lemma}
Let $(\rho_{n}^{B})$ and $(\rho_{n}^{E})$ be two sequences of density operators satisfying the following conditions:\\

(i) $tr|\rho_{n}^{B}-\rho^{B}|\rightarrow 0$; \\

(ii) $\limsup_{n \rightarrow \infty}S(\rho_{n}^{B})\leq S(\rho_{\ast}^{B})$ and $\limsup_{n \rightarrow \infty}S(\rho_{n}^{E})\leq S(\rho^{E})$. \\
Then
\begin{equation}
\liminf_{n \rightarrow \infty}[S(\rho_{n}^{B})-S(\rho_{n}^{E})]\geq S(\rho^{B})-S(\rho^{E})
\end{equation}
and
\begin{equation}
\liminf_{n \rightarrow \infty}[-S(\rho_{n}^{B})-S(\rho_{n}^{E})]\geq -S(\rho_{\ast}^{B})-S(\rho^{E}).
\end{equation}
\end{lemma}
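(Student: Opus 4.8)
The plan is to reduce both inequalities to a single analytic fact, the lower semicontinuity of the von Neumann entropy, combined with elementary properties of $\liminf$ and $\limsup$. Recall that with respect to trace-norm convergence $\mathrm{tr}|\rho_{n}-\rho|\rightarrow 0$ the von Neumann entropy satisfies $\liminf_{n}S(\rho_{n})\geq S(\rho)$; this is the content of the cited references \cite{We,Oh} and is the only nonelementary ingredient I would invoke, citing it rather than reproving it.

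For the first inequality I would proceed as follows. Hypothesis (i) gives $\rho_{n}^{B}\rightarrow\rho^{B}$ in trace norm, so lower semicontinuity yields $\liminf_{n}S(\rho_{n}^{B})\geq S(\rho^{B})$. I then apply the elementary superadditivity bound $\liminf_{n}(a_{n}-b_{n})\geq\liminf_{n}a_{n}-\limsup_{n}b_{n}$ with $a_{n}=S(\rho_{n}^{B})$ and $b_{n}=S(\rho_{n}^{E})$, and use the second clause of (ii), namely $\limsup_{n}S(\rho_{n}^{E})\leq S(\rho^{E})$, to obtain
\begin{equation}
\liminf_{n\rightarrow\infty}[S(\rho_{n}^{B})-S(\rho_{n}^{E})]\geq\liminf_{n\rightarrow\infty}S(\rho_{n}^{B})-\limsup_{n\rightarrow\infty}S(\rho_{n}^{E})\geq S(\rho^{B})-S(\rho^{E}).
\end{equation}

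For the second inequality I would first rewrite the quantity as $\liminf_{n}[-S(\rho_{n}^{B})-S(\rho_{n}^{E})]=-\limsup_{n}[S(\rho_{n}^{B})+S(\rho_{n}^{E})]$ and then apply the subadditivity of the $\limsup$, i.e. $\limsup_{n}(a_{n}+b_{n})\leq\limsup_{n}a_{n}+\limsup_{n}b_{n}$. Bounding the two resulting limsups by $S(\rho_{\ast}^{B})$ and $S(\rho^{E})$ using both clauses of (ii) gives
\begin{equation}
\liminf_{n\rightarrow\infty}[-S(\rho_{n}^{B})-S(\rho_{n}^{E})]=-\limsup_{n\rightarrow\infty}[S(\rho_{n}^{B})+S(\rho_{n}^{E})]\geq -S(\rho_{\ast}^{B})-S(\rho^{E}).
\end{equation}
Note that the first inequality never uses the first clause of (ii), whereas the second inequality never uses (i); the two parts are genuinely independent, and only their union requires all hypotheses.

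The step I expect to need the most care is not the inequality chain itself, which is routine, but ensuring the super/subadditivity manipulations avoid the indeterminate form $\infty-\infty$. Since entropies are nonnegative and the limsups in question are bounded above by the finite quantities $S(\rho_{\ast}^{B})$ and $S(\rho^{E})$ by hypothesis (ii), no indeterminacy arises and the splitting of $\liminf$ and $\limsup$ is legitimate. Thus the only substantive content is the lower semicontinuity of $S$, with everything else following from real-analysis bookkeeping.
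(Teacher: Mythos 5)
Your proposal is correct and follows essentially the same route as the paper: lower semicontinuity of $S$ under trace-norm convergence for the Bob sequence, the superadditivity bound $\liminf(a_n-b_n)\geq\liminf a_n-\limsup b_n$ together with the second clause of (ii) for the first inequality, and the $\limsup$ subadditivity (which is just the paper's $\liminf$-of-negatives step written in equivalent form) with both clauses of (ii) for the second. Your added remarks on avoiding the $\infty-\infty$ indeterminacy and on which hypotheses each inequality actually uses are sound refinements of bookkeeping that the paper leaves implicit, not a different argument.
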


\begin{proof}
(i) Since $tr|\rho_{n}^{B}-\rho^{B}|\rightarrow 0$, by use the semicontinuity of quantum entropy \cite{We,Oh}, we have that
\begin{equation*}
\liminf_{n \rightarrow \infty}S(\rho_{n}^{B})\geq S(\rho^{B})
\end{equation*}
Consequently
\begin{eqnarray*}
\liminf_{n \rightarrow \infty}[S(\rho_{n}^{B})-S(\rho_{n}^{E})]&\geq& \liminf_{n \rightarrow \infty}S(\rho_{n}^{B})+\liminf_{n \rightarrow \infty}[-S(\rho_{n}^{E})] \nonumber \\
&=&\liminf_{n \rightarrow \infty}S(\rho_{n}^{B})-\limsup_{n \rightarrow \infty}S(\rho_{n}^{E}) \nonumber \\
&\geq& S(\rho^{B})-S(\rho^{E}).
\end{eqnarray*}
Analogously,
\begin{eqnarray*}
\liminf_{n \rightarrow \infty}[-S(\rho_{n}^{B})-S(\rho_{n}^{E})]&\geq& \liminf_{n \rightarrow \infty}[-S(\rho_{n}^{B})]+\liminf_{n \rightarrow \infty}[-S(\rho_{n}^{E})]  \\
&=&-\limsup_{n \rightarrow \infty}S(\rho_{n}^{B})-\limsup_{n \rightarrow \infty}S(\rho_{n}^{E}) \\
&\geq& -S(\rho_{\ast}^{B})-S(\rho^{E}).
\end{eqnarray*}
\end{proof}

With the previous lemma, we can deduce an inequality for quantum privacy. Similarly to the reference \cite{West}, we will consider the states initially prepared by Alice as pure states and the environment also starts in a pure state.
\begin{theorem}
Suppose Alice sends states $\rho_{k}^{A_{1},A_{2},\ldots,A_{N}}$, where $k=0,1,2,\ldots$ denotes the possible states with probability $p_{k}$ in a system of dimension $d$. Bob receives states $\rho_{k}^{B}=\rho_{k}^{B_{1},B_{2},\ldots,B_{N}}=\varepsilon(\rho_{k}^{A_{1},A_{2},\ldots,A_{N}})$ due the interference of Eve. Besides, let $(\rho_{n}^{B_{i}})$ and $(\rho_{n}^{E_{i}})$ two sequences of density operators satisfying the conditions of previous lemma. Thus,
\begin{equation}
\sup_{n}P \geq I(\rho^{(1)},\varepsilon)-I'(\rho^{(2)},\varepsilon)- \cdots-I'(\rho^{(n)},\varepsilon),
\end{equation}
where
\begin{equation}
I'(\rho^{(i)},\varepsilon)=-S(\rho_{\ast}^{B_{i}})-S(\rho^{E_{i}}),
\end{equation}
and $\rho^{B_{i}}$ (or $\rho^{E_{i}})$ is the reduced state of $\rho^{B_{1},B_{2},\ldots,B_{N}}$ (or $\rho^{E_{1},E_{2},\ldots,E_{N}}$) on the system $B_{i}$ (or $E_{i}$).
\end{theorem}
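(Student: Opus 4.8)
The plan is to reduce the statement to Schumacher's lower bound \eqref{P2} applied along the approximating sequences, and then to split the resulting coherent information into a subsystem-$1$ contribution plus remainder terms that are exactly the objects controlled by the two inequalities of the Lemma. First I would fix the $n$-th preparation and invoke \eqref{P2} in the form $P\geq I(\rho_{n},\varepsilon)=S(\rho_{n}^{B})-S(\rho_{n}^{E})$, where $B=B_{1}\cdots B_{N}$ and $E=E_{1}\cdots E_{N}$ denote the full output systems held by Bob and Eve. Since this holds for every $n$, passing to $\liminf$ on the right gives $\sup_{n}P\geq\liminf_{n\to\infty}\bigl[S(\rho_{n}^{B})-S(\rho_{n}^{E})\bigr]$, so it remains to bound this $\liminf$ from below by the claimed combination of $I$ and $I'$.

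The heart of the argument is a subsystem decomposition of $S(\rho_{n}^{B})-S(\rho_{n}^{E})$. On Eve's side I would use subadditivity of the von Neumann entropy, $S(\rho_{n}^{E})\leq\sum_{i}S(\rho_{n}^{E_{i}})$, so that $-S(\rho_{n}^{E})\geq-\sum_{i}S(\rho_{n}^{E_{i}})$. On Bob's side I would peel off the first factor with the Araki--Lieb (triangle) inequality, iterated together with subadditivity to give $S(\rho_{n}^{B})\geq S(\rho_{n}^{B_{1}})-\sum_{i\geq 2}S(\rho_{n}^{B_{i}})$. Combining the two estimates regroups the joint quantity as
\begin{equation*}
S(\rho_{n}^{B})-S(\rho_{n}^{E})\geq\bigl[S(\rho_{n}^{B_{1}})-S(\rho_{n}^{E_{1}})\bigr]+\sum_{i\geq 2}\bigl[-S(\rho_{n}^{B_{i}})-S(\rho_{n}^{E_{i}})\bigr],
\end{equation*}
which is precisely a coherent information of subsystem $1$ together with the quantities that appear in the second inequality of the Lemma for the remaining subsystems.

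I would then pass to the limit. Using superadditivity of $\liminf$ to distribute it over the finite sum, the first bracket is controlled by the first inequality of the Lemma applied to the pair $(\rho_{n}^{B_{1}}),(\rho_{n}^{E_{1}})$, giving $\liminf_{n}[S(\rho_{n}^{B_{1}})-S(\rho_{n}^{E_{1}})]\geq S(\rho^{B_{1}})-S(\rho^{E_{1}})=I(\rho^{(1)},\varepsilon)$; each remaining bracket is controlled by the second inequality, giving $\liminf_{n}[-S(\rho_{n}^{B_{i}})-S(\rho_{n}^{E_{i}})]\geq-S(\rho_{\ast}^{B_{i}})-S(\rho^{E_{i}})=I'(\rho^{(i)},\varepsilon)$. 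Assembling the controlled pieces then produces the asserted lower bound for $\sup_{n}P$ in terms of $I(\rho^{(1)},\varepsilon)$ and the correction terms $I'(\rho^{(i)},\varepsilon)$.

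The main obstacle I anticipate lies in the decomposition step rather than in the limiting step: one must choose the combination of subadditivity and Araki--Lieb inequalities so that the surviving remainder matches \emph{exactly} the form $-S(\rho_{\ast}^{B_{i}})-S(\rho^{E_{i}})$ of $I'$, and one must track every sign with care, since the $I'$ terms are non-positive and the direction of each entropy inequality is therefore decisive for the final statement. A secondary technical point is the interchange of $\liminf$ with the finite sum: superadditivity of $\liminf$ yields an inequality only in the favorable direction, which is what is needed here, but it presupposes that the liminf/limsup quantities furnished by the Lemma's hypotheses are finite, so that no indeterminate $\infty-\infty$ arises when the brackets are recombined.
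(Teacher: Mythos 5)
Your proposal is correct and follows essentially the same route as the paper: Schumacher's bound \eqref{P2} applied to the $n$-th states, the Araki--Lieb triangle inequality on Bob's side combined with subadditivity on Eve's side to peel off the $B_{1},E_{1}$ pair, then superadditivity of $\liminf$ and the two inequalities of the Lemma to pass to the limit. The only difference is presentational: the paper compresses the whole decomposition step into a single citation of Araki--Lieb, whereas you make the subadditivity on Eve's side and the finiteness caveat explicit, which is a more careful rendering of the same argument.
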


\begin{proof}
We have that
\begin{eqnarray*}
P_{n} &\geq& S(\rho_{n}^{B_{1},B_{2},\ldots,B_{N}})-S(\rho_{n}^{E_{1},E_{2},\ldots,E_{N}}) \\
&\geq& S(\rho_{n}^{B_{1}})-S(\rho_{n}^{B_{2}})- \cdots -S(\rho_{n}^{B_{N}})-S(\rho_{n}^{E_{1}})- \cdots -S(\rho_{n}^{E_{N}}) \\
&=&S(\rho_{n}^{B_{1}})-S(\rho_{n}^{E_{1}})-S(\rho_{n}^{B_{2}})-S(\rho_{n}^{E_{2}})- \cdots -S(\rho_{n}^{B_{N}})-S(\rho_{n}^{E_{N}}) \\
\end{eqnarray*}
by Araki-Lieb triangle inequality \cite{araki}. Using the previous lemma
\begin{eqnarray*}
\sup_{n}P_{n} &\geq& \liminf_{n \rightarrow \infty}[S(\rho_{n}^{B_{1}})-S(\rho_{n}^{E_{1}})]+\liminf_{n \rightarrow \infty}[-S(\rho_{n}^{B_{2}})-S(\rho_{n}^{E_{2}})] \\
&+& \cdots + \liminf_{n \rightarrow \infty}[-S(\rho_{n}^{B_{N}})-S(\rho_{n}^{E_{N}})] \\
&\geq& S(\rho^{B_{1}})-S(\rho^{E_{1}})-S(\rho_{\ast}^{B_{2}})-S(\rho^{E_{2}})- \cdots-S(\rho_{\ast}^{B_{N}})-S(\rho^{E_{N}}) \\
&=&I(\rho^{(1)},\varepsilon)-I'(\rho^{(2)},\varepsilon)- \cdots -I'(\rho^{(N)},\varepsilon).
\end{eqnarray*}
\end{proof}

Importantly, this result generalizes the equation (\ref{P2}) for multipartite systems in an asymptotic context. In particular, for independent channels ($\varepsilon=\varepsilon_{1}\otimes\varepsilon_{2}\cdots \otimes\varepsilon_{N}$) and product states ($\rho=\rho_{1}\otimes\rho_{2}\cdots \otimes\rho_{N}$), we have
\begin{eqnarray}
\sup_{n}P_{n} &\geq& I(\rho^{(1)},\varepsilon_{1})+I(\rho^{(2)},\varepsilon_{2})+ \cdots +I(\rho^{(N)},\varepsilon_{N}) \nonumber \\
              &=&\sum_{i=1}^{N}I(\rho^{(i)}, \varepsilon_{i}), \nonumber
\end{eqnarray}
applying the additivity property of quantum entropy for product states.

\section{Quantum Privacy and Quantum Fidelity}\label{sec:QP2}

The measures of distinguishability of the quantum states are useful for analysis of the security in the transmission of quantum information. Biham and Mor presented in reference \cite{Biham} new limits on such measures and used these limits to prove security against a large class of attacks quantum key distribution. A distance measure widely used in information theory is quantum entanglement fidelity introduced by Schumacher in 1996 \cite{Sch}. It gives the amount of entanglement preserved in a quantum operation $\varepsilon$. Nielsen showed \cite{N2} that entanglement fidelity is the important quantity to maximize in schemes for quantum error correction. It determines how well the system state under error correction is maintained and how the entanglement with the auxiliary systems is preserved \cite{Beny}. In the following proposition we investigate a relationship between the entanglement fidelity and the optimal guaranteed privacy for a quantum channel.

\begin{proposition}
Suppose Alice sends states $\rho_{k}^{A}$ where k = 0,1,2 \ldots denotes the possible states with probability $p_{k}$, in a system of dimension $d$. Bob then receives  states $\rho_{k}^{B}=\varepsilon(\rho_{k}^{A})$ due to the interference of Eve. Let $p=infP$ on $\rho$. Then, since that $\varepsilon$ is unitary operation , $p$ has a maximum at $F = (d^2-1)/d^2$, where $F$ is the entanglement fidelity.
\end{proposition}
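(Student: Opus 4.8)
The plan is to merge the coherent-information lower bound for privacy, inequality (\ref{P2}), with the quantum Fano inequality, and then to study the resulting expression as a single-variable function of the entanglement fidelity $F$ whose stationary point is located by elementary calculus.

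First I would take as a starting point $P \geq S(\rho^{B})-S(\rho^{E})$, and identify $S(\rho^{E})$ with the entropy exchange of the channel, i.e. the entropy of the environment that purifies the Bob--Eve output. The essential analytic input is the quantum Fano inequality, which for a $d$-dimensional system controls the entropy exchange by the entanglement fidelity:
\begin{equation*}
S(\rho^{E}) \leq H(F)+(1-F)\log(d^{2}-1),
\end{equation*}
where $H$ denotes the binary entropy. Inserting this into (\ref{P2}) produces a bound on $P$ that depends only on $F$, on $d$, and on the output entropy $S(\rho^{B})$.

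Next I would use the hypothesis that $\varepsilon$ is unitary to remove the residual dependence on $S(\rho^{B})$: for a unitary channel the output entropy equals the input entropy and so is fixed by the ensemble rather than by $F$, which lets me treat it as a constant when passing to $p=\inf P$. After this reduction the privacy bound becomes a function of $F$ alone, and collecting the $F$-dependent terms gives an expression of the form $p(F)=\mathrm{const}+H(F)+F\log(d^{2}-1)$, with the relevant fidelity variable entering linearly through the Fano term. I would then differentiate, using $H'(F)=\log\frac{1-F}{F}$, and set
\begin{equation*}
\frac{d p}{dF}=\log\frac{(1-F)(d^{2}-1)}{F}=0,
\end{equation*}
which gives $(1-F)(d^{2}-1)=F$, hence $F=(d^{2}-1)/d^{2}$. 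Because $H$ is strictly concave, $p''(F)=H''(F)<0$, so this stationary point is indeed a maximum.

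The step I expect to be the main obstacle is not the optimization, which is routine, but the correct assembly of $p(F)$: one must apply the Fano inequality in the direction that lower-bounds $P$ and use unitarity to immobilize $S(\rho^{B})$, so that all the fidelity dependence isolates into the concave combination $H(F)+F\log(d^{2}-1)$. The delicate point is the convention and identification of the fidelity variable, since the reflected choice would place the extremum at the complementary value $1/d^{2}$ instead of $(d^{2}-1)/d^{2}$; pinning this identification down is precisely what makes the stationary point land at the claimed location.
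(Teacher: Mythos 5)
You follow essentially the same route as the paper's own proof: start from (\ref{P2}), identify $S(\rho^{E})$ with the entropy exchange $S(\rho,\varepsilon)$, insert the quantum Fano inequality, use unitarity of $\varepsilon$ to treat $S(\rho^{B})$ as a constant, and locate the stationary point of the resulting function of $F$. Your assembled expression $\mathrm{const}+H(F)+F\log(d^{2}-1)$ is term-for-term the right-hand side of the paper's (\ref{P4}), and you go beyond the paper in one respect: the paper merely asserts the location of the maximum, whereas you exhibit the derivative $\log[(1-F)(d^{2}-1)/F]$ and the concavity argument that justify it.

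However, the step you yourself flagged as delicate is a genuine problem, and your proposal (like the paper) does not resolve it. Fano gives an \emph{upper} bound, $S(\rho^{E})\leq H(F)+(1-F)\log(d^{2}-1)$; substituting an upper bound for $S(\rho^{E})$ into $P\geq S(\rho^{B})-S(\rho^{E})$ necessarily flips its sign, and the lower bound that actually follows is
\begin{equation*}
P\;\geq\; S(\rho^{B})-H(F)-(1-F)\log(d^{2}-1)\;=\;\mathrm{const}-H(F)+F\log(d^{2}-1),
\end{equation*}
with $-H(F)$, not $+H(F)$. No direction of the Fano inequality can place $+H(F)$ in a lower bound for $P$. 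With the correct sign the bound is a \emph{convex} function of $F$, its unique stationary point is $F=1/d^{2}$, and that point is a minimum; the calculus step then cannot produce a maximum at $(d^{2}-1)/d^{2}$ at all. The paper's (\ref{P4}) contains the same sign slip (its terms $-F\log F-(1-F)\log(1-F)$ equal $+H(F)$), so you have faithfully reproduced the published argument, explicit optimization included; but the ``identification'' you invoke to make the extremum land at $(d^{2}-1)/d^{2}$ rather than at the complementary value $1/d^{2}$ is precisely the point at which the derivation, as stated, does not go through.
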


\begin{proof}
The quantum Fano inequality states that \cite{Sch}
\begin{equation}
S(\rho, \varepsilon)\leq{H(F(\rho,\varepsilon))+(1-F(\rho,\varepsilon))log(d^2-1)},
\end{equation}
where $H$ is the binary Shannon entropy. Combining this inequality with the expression (\ref{P2}) and identifying $ S(\rho, \varepsilon)=S(\rho^{E})$ and $S(\varepsilon(\rho))=S(\rho^{B})$, we find
\begin{equation}
P\geq{S(\rho^{B})-FlogF-(1-F)log(1-F)+(F-1)log(d^2-1)}. \label{P4}
\end{equation}
Since $p=infP$, we have a maximum value p at $F = (d^2-1)/d^2$.
\end{proof}

In the figure (\ref{fig1}) we have $S=6$ e $d = 2,3,4$ and $8$, respectively. The maximum occurs at $F = 0.75$, $F=0.89$, $F = 0.94$ and  $F = 0.98$, respectively. This behavior is exhibited due to the presence of the binary Shannon entropy in the expression (\ref{P4}). We can also observe that increasing in the dimension of the system, the fidelity to the maximum value of $p$ becomes close to 1. This result indicates that an increase in the preservation of entanglement implies, since the fidelity is within a certain interval (small), a decrease in privacy. Note that this interval becomes smaller by increasing the size of the system, as shown in graphs.
\begin{figure}[th]
\centering
\includegraphics[scale=0.298]{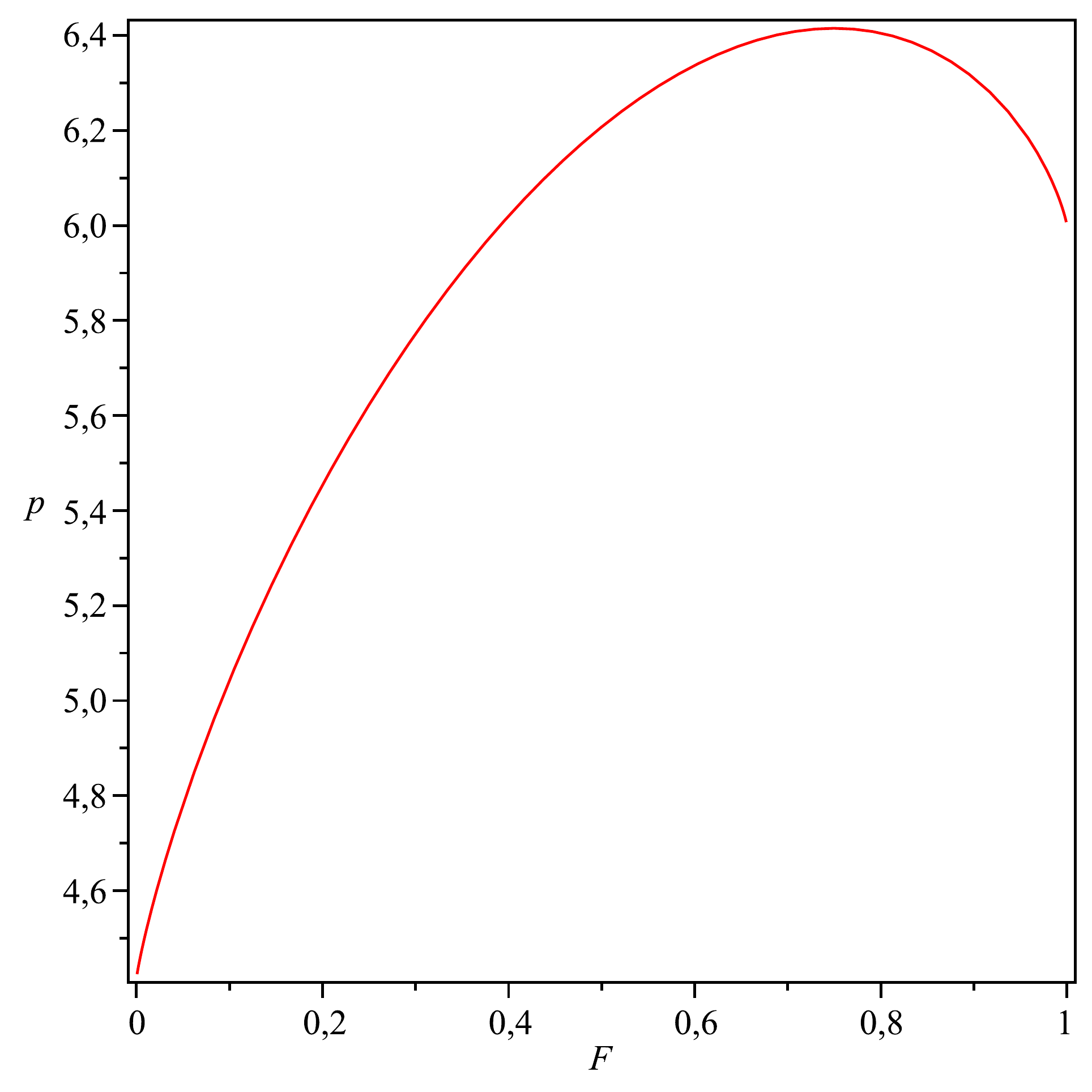}
\includegraphics[scale=0.298]{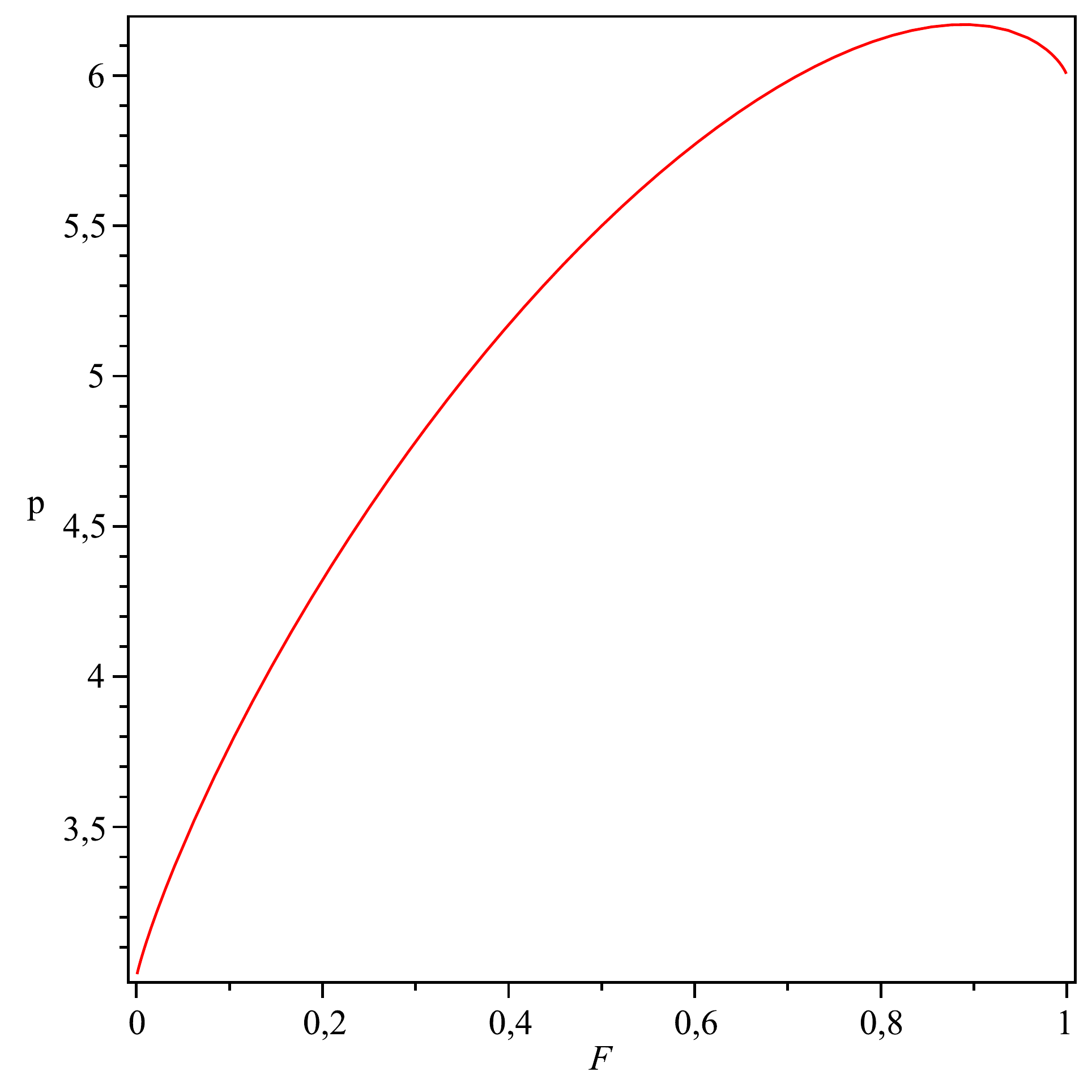}
\includegraphics[scale=0.298]{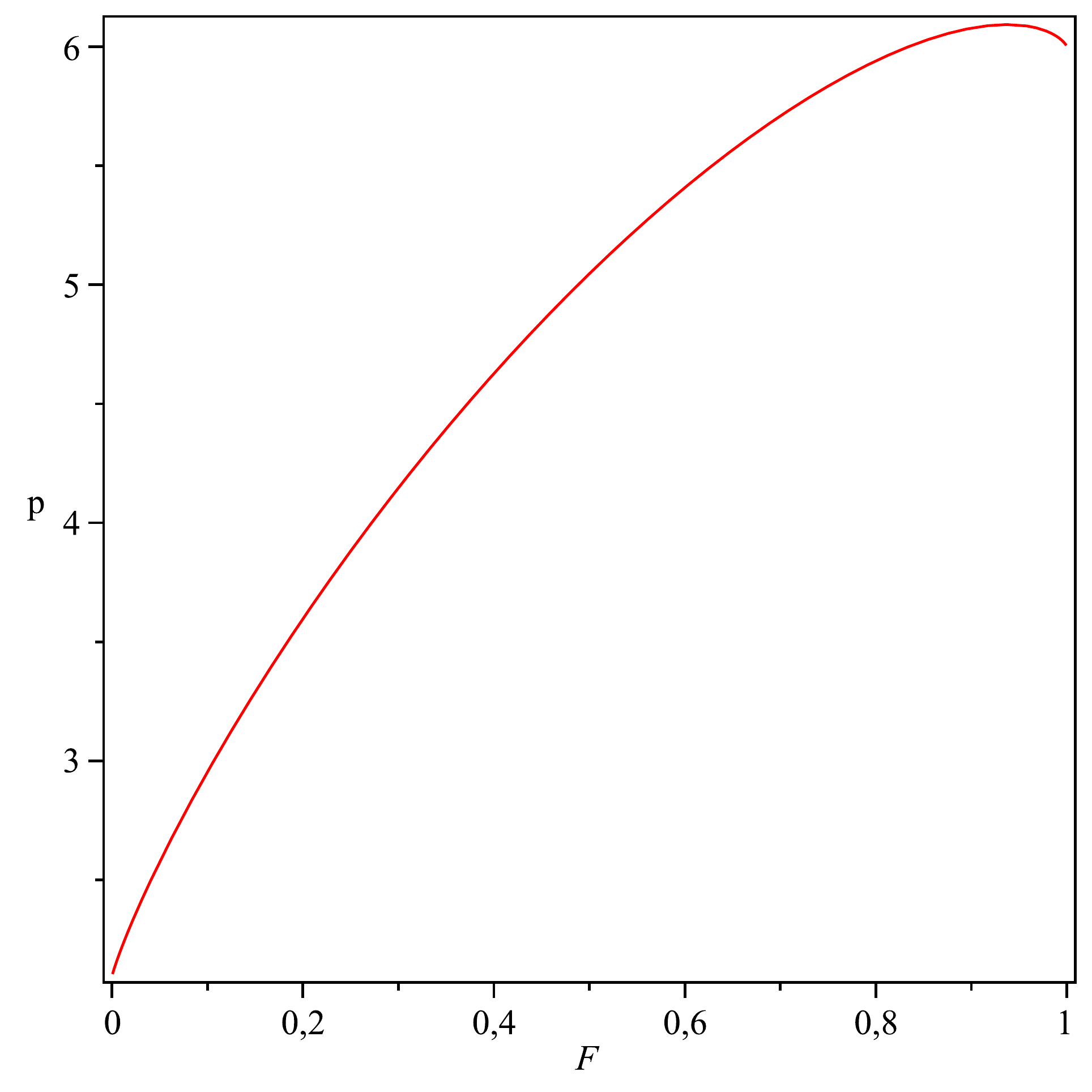}
\includegraphics[scale=0.298]{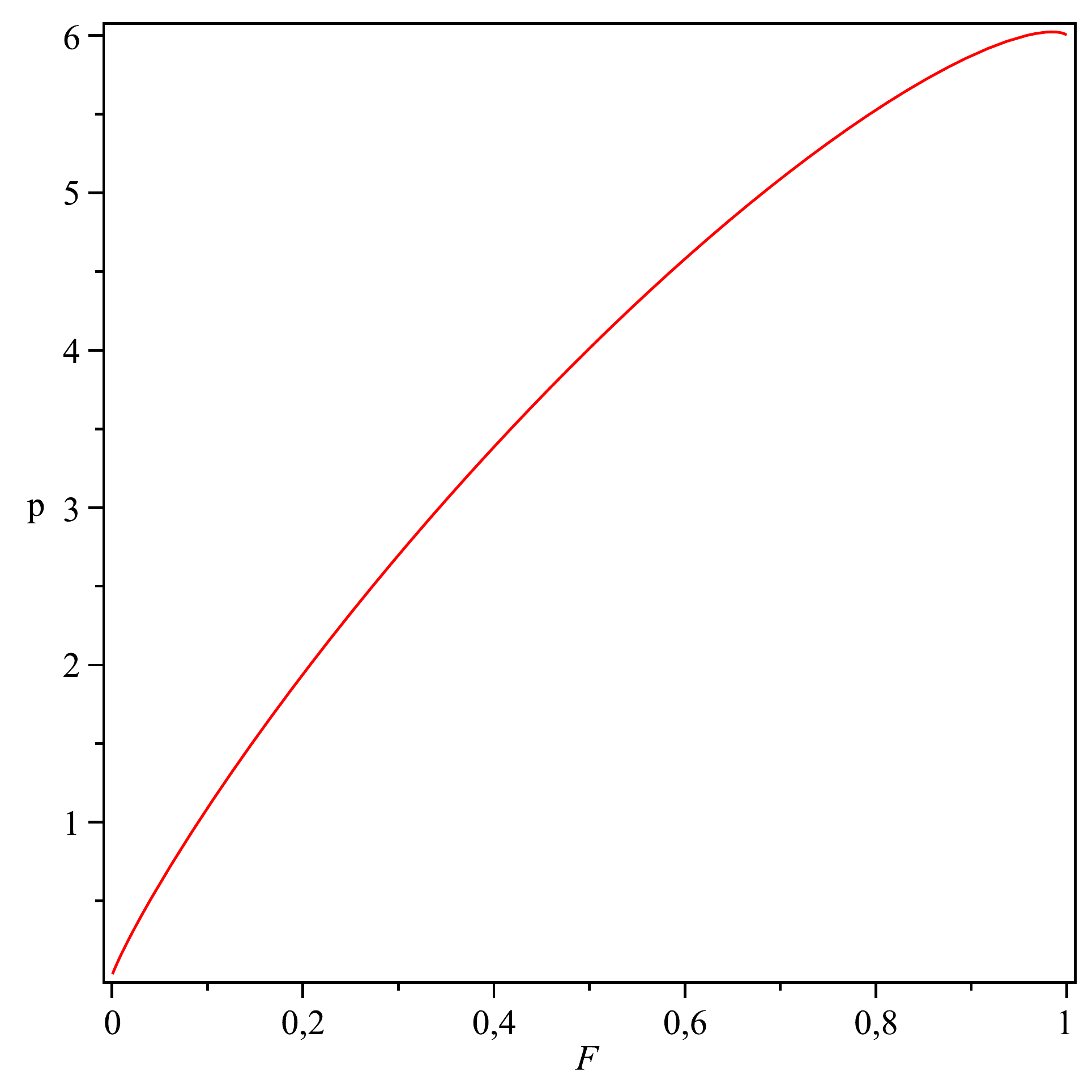}
\caption{$p$ as a function of fidelity for dimension $S(\rho^{B})=6$ and $d=2,3,4,8$, respectively.}
\label{fig1}
\end{figure}
\newpage
\section{Conclusions}\label{sec:conclu}
In this paper, we present a generalization of the inequality (\ref{P2}) to multipartite systems in an asymptotic scenario. The inequalities deduced herein establish asymptotic bounds for quantum privacy using a large number of quantum systems, which the key ingredient was the semicontinuity of quantum entropy. Next, we derived a relationship between the entanglement fidelity ant quantum privacy. We believe that this result is important because it indicates that an increase in the preservation of entanglement implies, since the fidelity is within a certain interval (small), a decrease in privacy, revealing an interesting aspect of these concepts in quantum information theory. We emphasize that result presented is not specific of any protocol and it is independent of the degree of entanglement. An investigation to obtain limits for privacy using the degree of entanglement are in progress.

\end{document}